\pgfplotsset{width=14cm,compat=1.9}
\newtheorem{theorem}{Theorem}
\newtheorem{lemma}{Lemma}
\DeclareMathOperator*{\argmin}{arg\,min}
\def\BibTeX{{\rm B\kern-.05em{\sc i\kern-.025em b}\kern-.08em
    T\kern-.1667em\lower.7ex\hbox{E}\kern-.125emX}}
\def \fwidth{0.9\columnwidth}
\def \fheight {0.45\columnwidth}
\def \sfwidth{0.9\columnwidth}
\def \sfheight {0.45\columnwidth}
\definecolor{color3}{HTML}{FFD700}
\definecolor{color2}{HTML}{EA5F94}
\definecolor{color1}{HTML}{9D02D7}
\definecolor{color0}{HTML}{0000FF}
\title{A Decentralized Policy for Minimization of Age of Incorrect Information in Slotted ALOHA Systems}
\author{%
\IEEEauthorblockN{Anupam Nayak\IEEEauthorrefmark{1}, Anders E. Kalør\IEEEauthorrefmark{2}, Federico Chiariotti\IEEEauthorrefmark{2}\IEEEauthorrefmark{3}, and Petar Popovski\IEEEauthorrefmark{2}}%
\IEEEauthorblockA{\IEEEauthorrefmark{1}Department of Electrical Engineering, IIT Bombay, India (anupam@ee.iitb.ac.in)}
\IEEEauthorblockA{\IEEEauthorrefmark{2}Department of Electronic Systems, Aalborg University, Denmark
(\{aek,fchi,petarp\}@es.aau.dk)}
\IEEEauthorblockA{\IEEEauthorrefmark{3}Department of Information Engineering, University of Padova, Italy (chiariot@dei.unipd.it)}}
\begin{document}

\maketitle

\begin{abstract}
The Age of Incorrect Information (AoII) is a metric that can combine the freshness of the information available to a gateway in an Internet of Things (IoT) network with the accuracy of that information. As such, minimizing the AoII can allow the operators of IoT systems to have a more precise and up-to-date picture of the environment in which the sensors are deployed. However, most IoT systems do not allow for centralized scheduling or explicit coordination, as sensors need to be extremely simple and consume as little power as possible. Finding a decentralized policy to minimize the AoII can be extremely challenging in this setting. This paper presents a heuristic to optimize AoII for a slotted ALOHA system, starting from a threshold-based policy and using dual methods to converge to a better solution. This method can significantly outperform state-independent policies, finding an efficient balance between frequent updates and a low number of packet collisions.
\end{abstract}

\section{Introduction}
Transmission policies that guarantee a timely access to wireless sensor data represent an important component for Internet of Things (IoT) sensor networks, which can be applied to environmental monitoring, industrial automation, and several other critical systems. This demand for timely information has motivated the definition of Age of Information (AoI)~\cite{kaul12}, which has been analyzed by a wide range of research studies over the past years, manifesting the relevance of timeliness in sensor networks~\cite{kosta17,yates21}. In addition to the AoI, a number of related metrics have been proposed, including the Value of Information (VoI)~\cite{ayan19}, which captures the error of the estimated sensor values at the sink, thereby allowing the policy to take into account the dynamics of the observed processes. The Age of Incorrect Information (AoII)~\cite{10.1109/TNET.2020.3005549} is a proposal that attempts to combine the two metrics, being defined as the product of the AoI and VoI.

Beyond the analysis of AoII in a variety of systems, there is a significant challenge of implementing policies and protocols that reduce AoII for low-power IoT sensor networks. A simple solution for minimizing AoI is to use a pull-based scheme, in which the receiver can poll sensors and schedule the one with the highest age. However, this solution has two issues: firstly, the receiver knows the AoI for each sensor, but not the VoI, so it can only optimize for the statistical value, and might schedule sensors that do not actually have valuable information even if their age is high. Secondly, polling can be an energy-intensive process, and many IoT communication technologies do not even support it. Indeed, in order for the gateway to be able to initiate the communication, sensors need to listen for requests at any time, which can quickly deplete their batteries.

The optimization of AoI and AoII is often performed using Markov models, which can optimize expected performance. In~\cite{KSA21}, the authors analyze Whittle index-based heuristics for minimizing AoI, achieving asymptotic optimality when the number of users is large. Whittle index policies can also be used for AoII, as in~\cite{9518209}, which considers a multi-sensor scenario with an infinite time horizon. Many recent works also involve the use of threshold based policies. \cite{Bjoshi} proposes threshold-type policies to minimize the AoII with estimation at the receiver in absence of updates for a single source observing an autoregressive Markov process. A policy which is a randomized combination of two deterministic threshold policies is shown to be optimal in \cite{YA21}, where the authors study problem of minimizing AoII with power constraints in the presence of an unreliable channel. \cite{delay} shows the optimality of a threshold based pre-emptive policy and achieves the minimum AoII under a uniform and bounded delay distribution.

However, most of these works either only deal with the first issue by considering statistical optimization, so that the system is still based on a centralized scheduling, or consider single-sensor systems, in which collisions are nonexistent. In real push-based communications, sensors need to be able to independently access the channel, while coordinating with each other in a distributed manner. This problem is significantly more challenging, as the information each sensor has is severely limited. An aggressive strategy can cause the network to become unstable, leading to extremely frequent collisions, while a too conservative one can starve the receiver, leading to an extremely high error rate. This decentralized setting has been studied in the context of the basic AoI metric. Threshold based policies for minimizing AoI in slotted ALOHA have been proposed in \cite{9162973,vavascan21}, where the sensors stay silent until they have a certain AoI, after which they transmit with a constant probability.

The objective of this work is to minimize AoII in a decentralized setting in which a group of IoT devices communicate to the sink over a shared slotted ALOHA random access channel. To this end, we formulate the AoII minimization task as a non-convex optimization problem, proposing a gradient-based algorithm to obtain an approximate solution. Our results show that the proposed algorithm manages to find policies that significantly outperform state-independent policies, and suggest that the optimal policy is a threshold-like function that depends both on the current AoI and the VoI.

The remainder of the paper is organized as follows. First, we present the system model in Sec.~\ref{sec:sysmodel}. We then analyze the AoII evolution in Sec.~\ref{sec:aoii_evolution} under an arbitrary policy, and derive an upper bound on the average AoII which we use in Sec.~\ref{sec:aoii_optim} to optimize the policy. Finally, we present numerical results in Sec.~\ref{sec:results} and conclude the paper in Sec.~\ref{sec:conclusion}.

\section{System Model}\label{sec:sysmodel}

The system we consider involves a remote BS that aims to collect observations from $N$ different sensors which observe independent, identically distributed discrete Markov processes. Specifically, at time $t=1,2,\ldots$ the $i$-th sensor observes $X_i(t) \in \mathbb{Z}$, a value governed by the random walk with the transition probability diagram depicted in Fig.~\ref{fig:2}. Note that $p_r+2p_t =1$, to ensure that they represent a valid probability space. Each sensor is responsible for communicating its observed value to the BS over a shared medium, whose access is regulated by slotted ALOHA: in every slot, any sensor may access the channel, and interference is destructive, so that if multiple sensors transmit in the same time slot, no packet is received. We assume that the BS acknowledges the successfully received packets, and will denote by $\hat{X}_i(t)$ the state of the Markov process from sensor $i$ that was most recently communicated to the BS. We do not assume that the nodes follow a specific retransmission strategy, but instead model retransmissions implicitly as part of the policy.

To jointly characterize the freshness and accuracy of the information the BS has, we define the AoII for sensor $i$ as
\begin{equation}
    A_i(t) = f_i(t)g_i(t),
\end{equation}
where $f_i(t)$ corresponds to the penalty for information freshness (age), while $g_i(t)$ accounts for a penalty that occurs due to the difference in the actual information. 
We will work with a linear time penalty given as $f_i(t) = t - U_i(t)$, where $U_i(t)$ corresponds to the last time instance when $X_i$ and $\Hat{X}_i$ were equal, i.e.,
\begin{equation}
   U_i(t) = \max \{t_i | t_i \leq t , X_i(t_i) = \Hat{X}_i(t_i) \}.
\end{equation}
For simplicity we define $\Hat{X}_i(0)=X_i(0)=0$.
Note that, according to this definition, $f_i(t)$ is \emph{not} the same as the time elapsed since last successful transmission by sensor $i$, but the time elapsed since the observed state was the same as the most recently transmitted state. Clearly, $f_i(t)$ is computable by sensor $i$, which observes the state transitions. The penalty term $g_i(t)$ is simply given by the difference between the two:
\begin{equation}
g_i(t) = |X_i(t)-\Hat{X}_i(t)|.
\end{equation}

The objective of this work is to devise a common transmission policy for all sensors, i.e, a mapping $\pi:\mathbb{Z}^2\rightarrow[0,1]$ that assigns a transmission probability for every pair $(f,g)$ of age and correctness penalties. Specifically, we seek a policy that minimizes the expected AoII averaged across the $N$ sensors:
\begin{equation}\label{eq:expected_aoii}
    \bar{A}=\mathbb{E}\left[\limsup_{T\to\infty}\frac{1}{TN}\sum_{t=1}^T\sum_{i=1}^N A_i(t)\middle\vert\pi\right],
\end{equation}
where the expectation is over the state evolution and the decisions of the transmission policy.

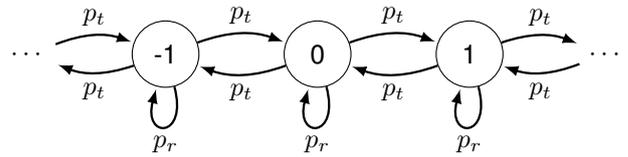
\begin{figure}
\centering
\begin{tikzpicture}[font=\sffamily]
   \node[state] at (0, 0) (S_m) {0};
   \node[state] at (2, 0) (S_1) {1};
   \node[state] at (-2, 0) (S_m2) {-1};
   \node[draw=none,  right=of S_1]   (m) {$ \cdots$};
  \node[draw=none,  left=of S_m2]   (k) {$ \cdots$};
    \draw[every loop, line width=0.3mm,
          >=latex,draw=black,
          fill=black]
      (m) edge[bend left=20, auto=left] node {$p_t$} (S_1)
      (S_1) edge[bend left=20, auto=left] node {$p_t$} (m)
      (k) edge[bend left=20, auto=left] node {$p_t$} (S_m2)
      (S_m2) edge[bend left=20, auto=left] node {$p_t$} (k)
      (S_m) edge[bend left=20, auto=left] node {$p_t$} (S_1)
      (S_1) edge[bend left=20, auto=left] node {$p_t$} (S_m)
(S_m2) edge[bend left=20, auto=left] node {$p_t$} (S_m)
(S_m) edge[bend left=20, auto=left] node {$p_t$} (S_m2)
      (S_1) edge[loop below, auto=left] node {$p_r$} (S_1)
      (S_m) edge[loop below, auto=left] node {$p_r$} (S_m)
(S_m2) edge[loop below, auto=left] node {$p_r$} (S_m);

\end{tikzpicture}
\caption{Process observed at each sensor.}
\label{fig:2}
\end{figure}

\section{AoII Evolution Analysis}\label{sec:aoii_evolution}
In this section, we derive the transition probabilities of the Markov chain describing the evolution of the AoII under a given policy $\pi$, which we will use in the next section to optimize the transmission policy. However, as the optimization problem is significantly complicated by the dependence/collisions among the sensors over the shared channel, at first we relax this constraint and assume that each user sees a fixed probability of success, computed as a steady state approximation. Consequently, we will focus on the evolution of $(f_i(t),g_i(t))$ for an arbitrary user $i$. Similar approximations have been used previously~\cite{LD12,BJK05} to analyse the performance of various random access protocols.

\subsection{Markov Model Formulation}\label{sec:markov_chain}
Since $f_i(t)$ and $g_i(t)$ are countably infinite, we truncate the Markov chain to $f_i(t)\leq F$, $g_i(t) \leq G$, setting fixed values $F$ and $G$, so that the mapping function becomes $\pi:\{0,\ldots,F\}\times\{0,\ldots,G\}\rightarrow[0,1]$. Assuming that $F$ and $G$ are sufficiently large, this assumption comes with a negligible impact on the optimal policy.
We define the truncated state as $f^{\tau}_i(t) = \min\{f_i(t),F\}$ and $g^{\tau}_i(t) = \min\{g_i(t),G\}$ so that the state space for each node is given by
\begin{equation}
    \mathcal{S} = \left\{(f,g)\;|\;f \leq F, \;g \leq \min(G,f) \right\}.
\end{equation}
Let $\pi(f^{\tau}_i(t),g^{\tau}_i(t))$ denote the probability that sensor $i$ transmits in state $(f^{\tau}_i(t),g^{\tau}_i(t))$. Denote further by $s_i(t)\in\mathcal{S}$ the tuple $(f^{\tau}_i(t),g^{\tau}_i(t))$, and by $q(s_i(t))$ denote the probability of successful transmission. Note that $\pi(s_i)\geq q(s_i)\,\forall s_i$, as not all transmissions are successful.

Any state for which $g^{\tau}_i(t)=0$ is collapsed into state $(0,0)$, due to the definition of the age penalty. We can then define the probability function $P(s,s')$, mapping the transition probabilities.
We also define the symbol $[x+y]_T=\min(T,x+y)$ to simplify the notation below.
 
 The simplest case is state $(0,0)$, in which we have:
 \begin{equation}
     P((0,0),s')=\begin{cases}
     2p_t(1-q((0,0))), &f'=g'=1;\\
     p_r+q((0,0)), & f'=g'=0,
     \end{cases}
 \end{equation}
 where $s'=(f',g')$. If we consider state $(f,1)$, we get:
  \begin{equation}
     P((f,1),s')\!=\!\begin{cases}
     p_t(1-q((f,1))), &f'\!=[f\!+\!1]_F,g'\!=2;\\
     p_r(1-q((f,1))), &f'\!=[f\!+\!1]_F,g'\!=1;\\
     p_t\!+\!(1\!-p_t)q((f,1)), &f'\!=g'\!=0,
     \end{cases}
 \end{equation}
 In all other cases, we have:
 \begin{equation}
     P(s,s')=\begin{cases}
     p_t(1-q(s)), &\begin{aligned} &f'=[f+1]_F,\\
     &g'\in\{[g+1]_G,g-1\};\end{aligned}\\
     p_r(1-q(s)), &f'=[f+1]_F,g'=g;\\
     q((s)), & f'=g'=0.
     \end{cases}
 \end{equation}

The chain on $\mathcal{S}$ defined by the transition probabilities above is a truncated version of the true AoII evolution.
Additionally, it is finite, aperiodic, and irreducible, i.e., every state can be reached from $(0,0)$ with positive probability and $(0,0)$ can be reached from every state with positive probability. Given these conditions, the chain has a stationary distribution $\phi(f,g)$.

\subsection{Probability of Successful Transmission}
The probability of successful transmission $q(s)$ used in the previous section depends on the policy $\pi$. Recall that we consider the case in which all sensors have the same policy, but act in an entirely independent fashion, and that we for simplicity assume that each user sees a fixed probability of success in any given state. Before deriving an expression for $q(s)$, we first present the following intermediate result.
\begin{lemma}\label{lemma:gminf}
For any reachable state $(f,g)$ with $f \neq F$, we have $g \leq f$.
\end{lemma}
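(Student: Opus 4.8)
The plan is to avoid the truncated transition kernel and work directly with the defining relations $f_i(t)=t-U_i(t)$ and $g_i(t)=|X_i(t)-\hat{X}_i(t)|$, tracing a reachable state back to the last slot in which the two processes agreed. Concretely, I would fix a reachable state $(f,g)$ with $f\neq F$ together with a time $t$ at which $(f^{\tau}_i(t),g^{\tau}_i(t))=(f,g)$. Since $f<F$, the truncation on the first coordinate is inactive, so $f_i(t)=f$; hence the anchor slot $\tau_0:=U_i(t)=t-f$ is a well-defined nonnegative integer, and by the definition of $U_i$ we have $X_i(\tau_0)=\hat{X}_i(\tau_0)$, i.e.\ $g_i(\tau_0)=0$.

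The first step is to show that $\hat{X}_i$ stays constant on $\{\tau_0,\tau_0+1,\dots,t\}$. The value $\hat{X}_i$ changes only upon a successful transmission, and right after a successful transmission in some slot $s'$ one has $\hat{X}_i(s')=X_i(s')$, i.e.\ $g_i(s')=0$; if such an $s'$ were to lie in $\{\tau_0+1,\dots,t\}$, it would belong to the set defining $U_i(t)$ while exceeding $\tau_0$, contradicting maximality. With $\hat{X}_i\equiv c$ on this interval, the second step is the one-step bound: since the underlying walk has increments in $\{-1,0,+1\}$, the reverse triangle inequality gives $|g_i(s+1)-g_i(s)|=\bigl|\,|X_i(s+1)-c|-|X_i(s)-c|\,\bigr|\le|X_i(s+1)-X_i(s)|\le1$. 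Summing these $f=t-\tau_0$ inequalities and using $g_i(\tau_0)=0$ yields $g_i(t)\le f$, and then $g=g^{\tau}_i(t)=\min\{g_i(t),G\}\le f$, which is the claim.

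The step I expect to be the main obstacle is exactly the constancy of $\hat{X}_i$ between the anchor slot and $t$: one must be careful that an intermediate successful transmission (which would reset $\hat{X}_i$, and hence bring $g_i$ to $0$) is precisely what the maximality in the definition of $U_i(t)$ rules out, since otherwise the telescoping argument breaks. As an alternative that sidesteps this bookkeeping, I would run a short induction on $f$ directly over the truncated chain: $(0,0)$ satisfies $g\le f$ trivially; any transition out of a state with $f\neq F$ either returns to $(0,0)$ or sends $f\mapsto f+1$ while changing $g$ by at most one; and, since $[f'+1]_F=f$ with $f<F$ forces $f'=f-1\neq F$, a reachable state with $f\neq F$ and $f\ge1$ is always entered from a predecessor satisfying the inductive hypothesis, so $g\le f$ propagates along every reachable path.
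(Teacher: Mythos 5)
Your argument is essentially the paper's own: the paper's one-line proof also observes that since the walk changes by at most one per step, accumulating an error of $g$ since the last instant of agreement (which is exactly $f$ slots ago) forces $g\le f$. You have simply made the same counting rigorous (anchoring at $U_i(t)$, noting $\hat{X}_i$ cannot change in between without resetting $g$ to zero, and telescoping the one-step bound), so the proposal is correct and takes the same route.
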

\begin{proof}
In order for $g$ to have a certain value, there must have been at least $g$ transitions starting from value 0. These transitions would require at least $f$ steps, as the chain can only increase or decrease by 1 in each step.
\end{proof}

Assuming that all sensors are in steady state, except the sensor of interest, we have $q(f,g) = \pi(f,g)\ell^{N-1}$,
where $\ell$ corresponds to the probability that no other sensor is transmitting. Using Lemma~\ref{lemma:gminf}, the value of $\ell$ can be computed as:
\begin{equation}
    \ell = \sum_{f=0}^{F}\sum_{g=1}^{\min(f,G)}\phi(f,g)(1-\pi(f,g)).
\end{equation}

\subsection{Upper Bound on the AoII}
We conclude the section by presenting an upper bound on the AoII.
We can give the expected truncated AoII (which is a lower bound to the real AoII) by applying the ergodic theorem of Markov chains:
\begin{equation}
    \mathbb{E}[fg] \geq \sum_{f=0}^{F}\sum_{g=0}^{\min(f,G)}fg\phi(f,g).
\end{equation}
Note that this value depends on $\pi$, as the steady state distribution is a function of the transmission policy. As the truncated AoII is a lower bound to the real AoII, it is not suitable for reliability-oriented optimization, as using it leads to optimistic policies and does not return a reliable estimate of the AoII.
We can then derive the following result.
\begin{theorem}\label{theo:upper_bound}
For any policy $\pi:\mathcal{S}\to[0,1]$ and a stationary distribution $\phi$ for the Markov chain derived in Section~\ref{sec:markov_chain}, the average truncated AoII $\mathbb{E}[fg]$ is upper bounded by
\begin{equation}
\begin{aligned}
    J(\pi,\phi)= FG\phi(F,G) +\sum_{f=1}^{F-1}\left[\phi(f,G)f^2+\sum_{g=1}^{\mathclap{\min(f,G-1)}}fg\phi(f,g)\right]\\
    +\sum_{g=1}^{G-1}G\phi(F,g)\left(F+\frac{(1-\min_{g\in\{1,\ldots,G-1\}}q(F,g))}{\min_{g\in\{1,\ldots,G-1\}}q(F,g)}\right)\\
    + \left(F+G+\frac{2(1-q((F,G))}{q((F,G))}\right)\frac{1-q((F,G))}{q((F,G))}\phi(F,G).
\end{aligned}
\end{equation}
\end{theorem}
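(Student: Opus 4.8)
The plan is to reduce the bound to a family of per-state estimates, one for each state of the truncated chain, and establish those via renewal--reward. Under the fixed-$q$ assumption of Section~\ref{sec:markov_chain}, the single-user process $(f_i(t),g_i(t))$ on $\{(f,g):0\le g\le f\}$ is a positive-recurrent Markov chain that regenerates at $(0,0)$, so that $\mathbb{E}[fg]$, its (untruncated) time-average AoII, equals $\mathbb{E}[R]/\mathbb{E}[C]$, where $C$ is the length of a regeneration cycle and $R=\sum_{t\in\text{cycle}}f_i(t)g_i(t)$. First I would put the true chain and the truncated chain of Section~\ref{sec:markov_chain} on a common probability space, driven by the same transmission attempts, the same success coins, and the same state-transition coins: every reset --- a successful transmission, or $g$ dropping to $0$ from $g=1$ --- then occurs simultaneously for both, so the two chains visit $(0,0)$ at the same times and share the cycle length $C$, and the truncated-state process $s^{\tau}(t)$ of the true chain coincides with the truncated chain. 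Hence $\phi(f,g)=\mathbb{E}[\#\{t\in\text{cycle}: s^{\tau}(t)=(f,g)\}]/\mathbb{E}[C]$, and it suffices to show, for every truncated state $(f,g)$,
\[
\mathbb{E}\Big[\textstyle\sum_{t\in\text{cycle}:\,s^{\tau}(t)=(f,g)}f_i(t)g_i(t)\Big]\ \le\ c(f,g)\,\mathbb{E}\big[\#\{t\in\text{cycle}: s^{\tau}(t)=(f,g)\}\big],
\]
where $c(f,g)$ is the coefficient of $\phi(f,g)$ in $J(\pi,\phi)$; summing over $(f,g)$ and multiplying by $\mathbb{E}[C]$ then gives the theorem.

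Next I would verify these per-state estimates. When $f<F$ and $g<G$, the true and truncated coordinates agree at every such slot, so both sides equal $fg$ times the visit count; this gives the $fg\,\phi(f,g)$ terms. When $f<F$ and $g=G$, the $f$-coordinate is exact but $g_i(t)\ge G$; Lemma~\ref{lemma:gminf} still gives $g_i(t)\le f$, hence $f_i(t)g_i(t)\le f^2$, which yields $c(f,G)=f^2$. When $f=F$ and $g<G$, the $g$-coordinate is exact and $g<G$, so $f_i(t)g_i(t)\le G\,f_i(t)$, and it remains to bound the cycle-average of $f_i(t)$ over visits to $(F,g)$: while $f_i(t)\ge F$ the age grows by one per slot, so $f_i(t)-F$ is the age of the current excursion above level $F$, that excursion ends at each slot with probability at least $q:=\min_g q(F,g)$ (a reset occurs on any successful transmission, and possibly earlier), so with $p_k$ the probability the excursion lasts more than $k$ further slots we have $p_k\le(1-q)^k$, and a short computation gives $\sum_{k\ge1}k\,p_k\le\frac{1-q}{q}(1+\sum_{k\ge1}p_k)$, i.e.\ the cycle-average of $f_i(t)-F$ is at most $\frac{1-q}{q}$; this produces $c(F,g)=G(F+\frac{1-q}{q})$. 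Finally, at the corner $(F,G)$ I would write $f_i(t)=F+a_i(t)$, $g_i(t)=G+b_i(t)$ with $a_i,b_i\ge0$, expand $f_i g_i=FG+Ga_i+Fb_i+a_ib_i$, bound the cycle-averages of $a_i$ and $b_i$ each by $\tfrac{1-q(F,G)}{q(F,G)}$ exactly as above, and bound the average of the cross term $a_ib_i$ by treating the two overflow runs as independent geometric runs with rate $q(F,G)$; this reproduces $c(F,G)=FG+\big(F+G+\tfrac{2(1-q(F,G))}{q(F,G)}\big)\tfrac{1-q(F,G)}{q(F,G)}$, i.e.\ the first and last groups of terms in $J$.

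The main obstacle is the corner state $(F,G)$ and, more generally, getting the exact constants at the boundary states. Away from the corner the overflow is one-dimensional and the backward-recurrence/geometric estimate is clean; at the corner the $f$-excess $a_i$ may have been accumulating while $g_i$ was still below $G$, so $a_i$ and $b_i$ are not governed by a single excursion clock, and controlling $\mathbb{E}[a_ib_i]$ rigorously needs either a careful coupling of the $(F,G)$-sojourns with the enclosing excursion above level $F$ or the conservative step of bounding the two overflows by independent geometric runs --- which is what the stated form of $J$ encodes. A milder slack appears in the $(F,g)$ terms, where replacing $q(F,g)$ by $\min_g q(F,g)$ (and not accounting for an excursion that passes through $g=G$) is what makes the constant uniform. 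One should also check that the geometric bounds and $J$ itself are well defined: if the policy assigns probability $0$ to some relevant boundary state, the corresponding $q$ vanishes and $J=+\infty$, so the bound is trivially true; and the special transition rules at $g\in\{0,1\}$ only make resets more likely, hence never reverse the direction of any inequality used above.
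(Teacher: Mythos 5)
Your proposal is correct and follows essentially the same route as the paper: the same partition of the problematic boundary states into $\mathcal{S}_F$, $\mathcal{S}_G$ and the corner $(F,G)$, the same use of Lemma~\ref{lemma:gminf} to obtain the $f^2$ coefficient on $(f,G)$, exact accounting $fg$ on interior states, and the same geometric-sojourn bounds with rate $\min_{g}q(F,g)$ (resp.\ $q(F,G)$) for the overflow terms. The renewal--reward and coupling packaging you add is a more careful justification of why each conditional sojourn bound gets weighted by the stationary mass $\phi(f,g)$, but it does not change the substance of the argument.
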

\begin{proof}
Let us consider the two subsets of $\mathcal{S}$ given by
\begin{align}
    \mathcal{S}_F=\left\{(F,g):0<g<G\right\};
    \mathcal{S}_G=\left\{(f,G):G\leq f<F\right\}; \nonumber
\end{align}
The only possible transitions out of these classes are to states $(0,0)$, for successful transmission, or $(F,G)$, if the other limit is reached. Transitions within each class are possible, as defined by $P$. The non-truncated AoII, $A_i(t)=f_i(t)g_i(t)$, is different from the truncated AoII only if the truncated chain is in one of the states belonging to $\mathcal{S}_f$, $\mathcal{S}_g$ or $(F,G)$.

The probability of exiting $\mathcal{S}_f$ through a successful transmission at any given time, denoted as $\omega_F$, is given by
\begin{equation}
    \omega_F=\frac{\sum_{s\in\mathcal{S}_f}\phi(s)q(s)}{\sum_{s\in\mathcal{S}_f}\phi(s)}.
\end{equation}
This is lower-bounded by $\min_{s\in\mathcal{S}_F}q(s)$, so that the geometric distribution with parameter $\min_{s\in\mathcal{S}_F}q(s)$ represents an upper bound to the time spent in $\mathcal{S}_F$.
The AoII after being in $\mathcal{S}_F$ for $i$ slots is upper-bounded by $(F+i)G$, as all states in $\mathcal{S}_F$ have an error smaller than $G$. We can adopt the same approach for state $(F,G)$, as the AoII after remaining in the state for $i$ steps is upper-bounded by $(F+i)(G+i)$. Finally, we consider class $\mathcal{S}_G$: thanks to Lemma~\ref{lemma:gminf}, the AoII in state $(f,G)$ is upper-bounded by $f^2$. We can then join the pieces to obtain the bound

\begin{multline}
\mathbb{E}[A_i(t)] 
\leq \sum_{f=1}^{F-1}\quad\quad\sum_{g=1}^{\mathclap{\min(G-1,f)}}\phi(f,g)fg\; + \; \underbrace{\sum_{f=1}^{F-1}\phi(f,G)f^2}_{\text{U.B. for states in } \mathcal{S}_G}\\ +\underbrace{\sum_{i=0}^{\infty}\left[ (\min_{s\in\mathcal{S}_F}q(s))(1-\min_{s\in\mathcal{S}_F}q(s))^i(F+i)G \sum_{s\in\mathcal{S}_F}\phi(s)\right] }_{\text{U.B. for states in } \mathcal{S}_F}\\
\underbrace{\sum_{i=0}^{\infty} \phi((F,G))(q((F,G)))(1-q((F,G)))^i(F+i)(G+i)}_{\text{U.B. for state }(F,G)}
\end{multline}
We can solve the second term in the sum as follows:
\begin{multline}
 \sum_{i=0}^{\infty}\left[ (\min_{s\in\mathcal{S}_F}q(s))^i(1-\min_{s\in\mathcal{S}_F}q(s))(F+i)G \sum_{s\in\mathcal{S}_F}\phi(s)\right]=\\
 G\sum_{s\in\mathcal{S}_F}\phi(s)\left(F+\frac{1-\min_{s\in\mathcal{S}_F}q(s)}{\min_{s\in\mathcal{S}_F}q(s)}\right)=\\
 \sum_{g=1}^{G-1}G\phi(F,g)\left(F+\frac{(1-\min_{s\in\mathcal{S}_F}q(s)}{\min_{s\in\mathcal{S}_F}q(s)}\right).
\end{multline}
Finally, the series giving the upper bound for state $(F,G)$ can be solved as below, omitting the term $\phi((F,G))$:
\begin{multline}
\sum_{i=0}^{\infty} q((F,G))(1-q((F,G)))^i(F+i)(G+i)=\\
\left[FG+\frac{(1-q((F,G))}{q((F,G))}\left(F+G+\frac{2(1-q((F,G))}{q((F,G))}\right)\right].
\end{multline}
If we sum the components, we obtain the value of $J(\pi,\phi)$.
\end{proof}
Theorem~\ref{theo:upper_bound} provides a closed-form upper bound on the system under the steady state assumption, which we can use to search for a policy $\pi^*$ that minimizes approximated AoII. Such a solution is expected to also have a small actual AoII.

\section{AoII optimization}\label{sec:aoii_optim}
\subsection{Problem Definition}
We can now define an optimization problem over the policy space. Our objective is to find the policy $\pi^*$ that minimizes the average expected AoII, and can be defined as
\begin{equation}
    \pi^*=\argmin_{\pi:\mathcal{S}\rightarrow[0,1]}
    \mathbb{E}\left[\limsup_{T\to\infty}\frac{1}{TN}\sum_{t=1}^T\sum_{i=1}^N f_i(t)g_i(t)\middle\vert\pi\right].
\end{equation}
As mentioned, we approximate the minimization using the bound $J(\pi,\phi)$ derived in Theorem~\ref{theo:upper_bound}. To this end, we define the following problem:
\begin{mini}
{(\pi,\phi)}{J(\pi,\phi)}
{}{}
\addConstraint{\phi P = P,\ \ \|\phi\|_1=1}
\addConstraint{0 \leq \phi(s) \leq 1\,\forall s\in\mathcal{S}}
\addConstraint{0 \leq \pi(s) \leq 1\,\forall s\in\mathcal{S}.}\label{eq:mini1}
\end{mini}
Here, the constraint $\phi P = P$ represents the equality conditions for the steady state distribution, where $P$ here is the transition probability matrix for the Markov chain derived in Section~\ref{sec:markov_chain}, which is a function of both $\pi$ and $\phi$. The other constraints ensure that $\phi$ and $\pi(s)$ are probability distributions.

Because of the constraints, \eqref{eq:mini1} is difficult to solve directly. Instead, we consider the Lagrangian relaxation to the above problem given as
\begin{mini}[2]
{(\pi,\phi)}{J(\pi,\phi)+k_1c_1 +k_2c_2+k_3c_3+k_4c_4,}
{}{}
\label{opt1}
\end{mini}
where the four penalty terms are defined as:
\begin{equation}
\begin{aligned}
c_1 &= (\|\phi P - \phi\|_2)^2;\\
c_2 &= \|\pi-1\|^2\mathbb{I}\{\pi>1\}+\|\pi\|^2\mathbb{I}\{\pi<0\};\\
c_3 &=(\|\phi\|_1- 1)^2;\\
c_4 &= (\|\phi-\Hat{1}\|)^2\mathbb{I}\{\phi>1\}+\|\phi\|^2\mathbb{I}\{\phi<0\},
\end{aligned}
\end{equation}
where $k_1,k_2,k_3,k_4$ are nonnegative Lagrange multipliers that penalize the violation of the constraints. As the Lagrange multiplier $k_i$ is increased, the constraint becomes tighter, i.e. the resulting policy chosen by the algorithm has lower $c_i$.

\subsection{Optimization Algorithm}
Solving the relaxed minimization problem on a computer involves rewriting the objective function as
\begin{mini}[2]
{(\pi,\phi)}{J(\pi,\phi)+\sum_{i=1}^4k_i\rho(c_i - \epsilon_i),}
{}{}
\label{opt2}
\end{mini}
where $\rho(x)=\max(0,x)$ is the  rectified linear unit (ReLU) function, $\epsilon_i$ is a tolerance level for constraint $i$ quantifying satisfactory performance on constraint $i$. Ideally, in problem (\ref{opt1}) one should increase $k_i$ in steps to $\infty$ which ensures that the solution obtained have $c_i \rightarrow 0$ asymptotically. However trying to make $k_i c_i$ arbitrarily small can result in very high values of $J(\pi,\phi)$, and is a procedure of high complexity.
 
 Using tolerances can help us avoid having to increase $k_i$ to $\infty$ trying to get arbitrarily small $c_i$ as anything below the tolerance level $\epsilon_i$ will fetch a zero penalty. Thus increasing $k_i$ will not change the solution. In our experiments we set $k_i$ to a constant large value in the initialization itself. However our experiments suggest that leaky ReLU, $\rho(x)=\max(\rho_a x,x)$ ($\rho_a\ll 1$), leads to better convergence than the regular ReLU.

\begin{algorithm}[tb]
\SetAlgoLined
\footnotesize
\caption{Gradient descent based minimization}\label{alg:agd}
 Input: initial $\phi_i$, $\pi_i$, consts. $K_1, K_2, K_3, K_4$, $\epsilon_1$, $\epsilon_2$, $\epsilon_3$ ,$\epsilon_4$ \\ 
 \For{$i\leq \text{max steps}$}{
 $U(\pi, \phi)=J(\pi,\phi)+K_1 \rho(c_1-\epsilon_1) +K_2\rho(c_2-\epsilon_2) + K_3\rho(c_3-\epsilon_3) +K_4\rho(c_4-\epsilon_4)$\\
$\pi \leftarrow \pi - \alpha_{\pi}\frac{\nabla_\pi U(\pi,\phi)}{\|\nabla_\pi U(\pi,\phi)\|_2}\\
\phi \leftarrow \phi - 
\alpha_{\phi}\frac{\nabla_\phi
U(\pi,\phi)}{\|\nabla_\phi U(\pi,\phi)\|_2}$\\}
\Return $\; \pi, \phi$
\label{Algo}
\end{algorithm}

The method used for minimization is based on gradient descent with a normalized gradient since the raw gradient can be very large at certain points, and using this is likely to result in probabilities that are either negative or greater than one. $\alpha_{\phi}$ and $\alpha_{\pi}$ are the learning rates for $\phi$ and $\pi$, respectively.

\subsection{Threshold Initialization}
Since the problem is complicated non-convex, the initialization of $\pi$ and $\phi$ play a central role in obtaining a good policy. We initialize $\pi$ as a threshold policy, which has demonstrated to work well in practice. Specifically, we set $\pi(f,g) = 0$ for $fg < \tau$ and $\pi(f,g) = p$ for $fg \geq \tau$, where $\tau$ is a suitably chosen threshold. In our experiments we set $\tau$ equal to the mean AoII obtained using Algorithm~\ref{alg:agd} with initialization $\phi(f,g)\; \propto\; \frac{1}{fg}$, (normalized such that the sum is 1) and $\pi(f,g) \; \propto \;fg$, scaled such that $\pi \phi = 0.9$ transmission attempts, and $p=\frac{5}{N}$.
After empirically finding a good $\tau$ for a given value of $N$ and $p_t$, it can be extrapolated to other values of $p_t$ using a simple relation threshold $\tau\; \propto \; \sqrt{p_t}$, which can be shown to approximate the real AoII.

\section{Numerical Results}\label{sec:results}

In order to verify the quality of our optimization, we tested it in a Monte Carlo simulation over $10^5$ steps for each scenario. We compare our optimization with two state-independent benchmarks, which transmit regardless of the value of $f$ and $g$, as long as $g>0$, i.e., there is new information to send. The first strategy is to always transmit with probability $1/N$, and is dubbed PT1, while the second limits the load to $E$ by having sensors transmit with probability $E/N$.

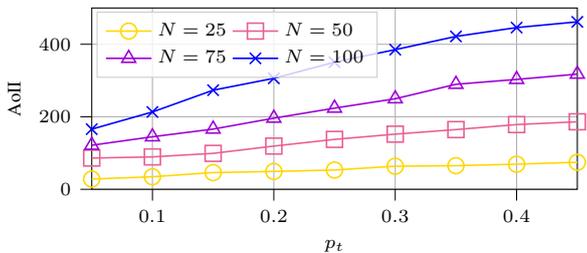
\begin{figure}
\begin{tikzpicture}

\definecolor{darkgray176}{RGB}{176,176,176}
\definecolor{lightgray204}{RGB}{204,204,204}

\begin{axis}[
width=\fwidth,
height=\fheight,
legend cell align={left},
legend style={
  fill opacity=0.8,
  legend columns=2,
  draw opacity=1,
  text opacity=1,
  at={(0.01,0.98)},
  anchor=north west,
  draw=lightgray204
},
font=\scriptsize,
tick align=outside,
tick pos=left,
x grid style={darkgray176},
xlabel={$p_t$},
xmajorgrids,
xmin=0.05, xmax=0.45,
xtick style={color=black},
y grid style={darkgray176},
ylabel={AoII},
ymajorgrids,
ymin=0, ymax=500,
ytick style={color=black}
]
\addplot [semithick, color3, mark=o, mark size=3, mark options={solid}]
table {%
0.05 28.26
0.1 34.57
0.15 46.12
0.2 49.25
0.25 53.18
0.3 63.65
0.35 65.08
0.4 69.22
0.45 74.56
};
\addlegendentry{$N=25$}
\addplot [semithick, color2, mark=square, mark size=3, mark options={solid}]
table {%
0.05 86.13
0.1 89.36
0.15 99.12
0.2 119.22
0.25 137.64
0.3 152.1
0.35 164.7
0.4 178.5
0.45 186.21
};
\addlegendentry{$N=50$}
\addplot [semithick, color1, mark=triangle, mark size=3, mark options={solid}]
table {%
0.05 121.49
0.1 144.87
0.15 166.08
0.2 196.17
0.25 223.84
0.3 249.72
0.35 289.61
0.4 303.06
0.45 317.62
};
\addlegendentry{$N=75$}
\addplot [semithick, color0, mark=x, mark size=3, mark options={solid}]
table {%
0.05 165.88
0.1 213.26
0.15 273.32
0.2 306.08
0.25 350.62
0.3 385.1
0.35 421.59
0.4 446.02
0.45 461.8
};
\addlegendentry{$N=100$}
\end{axis}

\end{tikzpicture}
\caption{Average AoII as a function of $p_t$.}
\label{fig:aoii}
\end{figure}

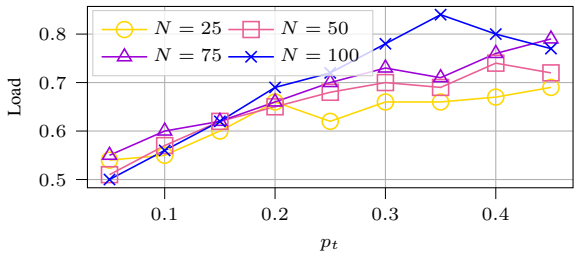
\begin{figure}
\begin{tikzpicture}

\definecolor{darkgray176}{RGB}{176,176,176}
\definecolor{lightgray204}{RGB}{204,204,204}

\begin{axis}[
width=\fwidth,
height=\fheight,
legend cell align={left},
legend style={
  fill opacity=0.8,
  draw opacity=1,
  text opacity=1,
  legend columns=2,
  at={(0.01,0.98)},
  anchor=north west,
  draw=lightgray204
},
  font=\scriptsize,
tick align=outside,
tick pos=left,
x grid style={darkgray176},
xlabel={$p_t$},
xmajorgrids,
xmin=0.03, xmax=0.47,
xtick style={color=black},
y grid style={darkgray176},
ylabel={Load},
ymajorgrids,
ymin=0.483, ymax=0.857,
ytick style={color=black}
]
\addplot [semithick, color3, mark=o, mark size=3, mark options={solid}]
table {%
0.05 0.54
0.1 0.55
0.15 0.6
0.2 0.66
0.25 0.62
0.3 0.66
0.35 0.66
0.4 0.67
0.45 0.69
};
\addlegendentry{$N=25$}
\addplot [semithick, color2, mark=square, mark size=3, mark options={solid}]
table {%
0.05 0.51
0.1 0.57
0.15 0.62
0.2 0.65
0.25 0.68
0.3 0.7
0.35 0.69
0.4 0.74
0.45 0.72
};
\addlegendentry{$N=50$}
\addplot [semithick, color1, mark=triangle, mark size=3, mark options={solid}]
table {%
0.05 0.55
0.1 0.6
0.15 0.62
0.2 0.66
0.25 0.7
0.3 0.73
0.35 0.71
0.4 0.76
0.45 0.79
};
\addlegendentry{$N=75$}
\addplot [semithick, color0, mark=x, mark size=3, mark options={solid}]
table {%
0.05 0.5
0.1 0.56
0.15 0.62
0.2 0.69
0.25 0.72
0.3 0.78
0.35 0.84
0.4 0.8
0.45 0.77
};
\addlegendentry{$N=100$}
\end{axis}

\end{tikzpicture}
\caption{Average load as a function of $p_t$ for the dual policy.}
\label{fig:energy}\vspace{-0.4cm}
\end{figure}

A leaky ReLU with a slope $\rho_a=10^{-6}$ was used in all computations. We used tolerance levels
$\epsilon_1=10^{-3}$, %
$\epsilon_2=10^{-6}$, $\epsilon_3=10^{-5}$, and $\epsilon_4=10^{-6}$, and penalties $K_1=10^{8}$, %
$K_2=10^{11}$, $K_3=10^{10}$, and $K_4=10^{11}$. In most of our experiments the value of $J(\pi,\phi)$ is in the order of $10^3$; thus, these values ensure that the product $K_iP_i$ is about two orders of magnitude greater than $J(\pi,\phi)$ whenever $c_i>\epsilon_i$.

We simulated the resulting policy for different values of $N$ and $p_t$, after initializing all sensors in state $(0,0)$. The time average AoII (the true AoII, not the truncated version) is shown in Fig.~\ref{fig:aoii}. As expected, the AoII increases with $p_t$ and $N$, and the growth seems to be approximately proportional to $\sqrt{p_t}$: as we will discuss later, this behavior is reflected by the chosen threshold. However, the effect of the number of sensors $N$ depends on $p_t$, as the total load on the network has a non-linear effect: as we need to maintain the slotted ALOHA system in its stability range to avoid a complete collapse, the time $f$ between subsequent transmissions can significantly increase, and so will the error $g$ as transitions in the chain accumulate. The AoII will be a product of these, so the effect compounds, making systems with faster transitions much more sensitive to an increased number of nodes. However, the steady state approximation becomes more accurate with $N$, so the dual policy gets closer to the optimum.

Fig.~\ref{fig:energy} shows the average total number of transmissions across all sensors per time slot obtained from these simulations. Interestingly, when transitions are rare, the total number of transmissions is almost independent from $N$: as the channel is less loaded, the priority is to avoid collisions, as sensors in larger networks are able to transmit less often and still obtain a good AoII performance. When $p_t>0.2$, transitions become frequent, and holding back transmissions enough to avoid congestion becomes suboptimal: in larger networks, sensors can achieve a better AoII by attempting to transmit and failing with a relatively high probability, rather than waiting and risking wasting some slots. This matches our earlier intuition of a non-linearity in the optimal behavior, that becomes more pronounced as the network approaches full load and becomes severely congested. Note that the number of transmissions is less than $1$ in all cases, never reaching higher than 0.85: as such, the optimal policy is also more energy-efficient than state-independent slotted ALOHA approaches. Note that we can come up with policies with lower energy consumption ($\approx 0.5-0.6$) for higher values of $p_t$. This can be done by adding an energy constraint with a penalty($p_e = \|\pi \phi-0.5\|_2^2\mathbb{I}(\pi \phi > 0.5))$  to the objective function. However, these policies result in a higher AoII (about double the AoII we obtain without the constraint).

\begin{figure}
\centering
\begin{tikzpicture}

\definecolor{darkgray176}{RGB}{176,176,176}
\definecolor{lightgray204}{RGB}{204,204,204}

\begin{axis}[
width=\sfwidth,
height=\sfheight,
legend cell align={left},
legend style={fill opacity=0.8, draw opacity=1, text opacity=1, draw=lightgray204,legend columns=2,
  at={(0.99,0.02)},
  anchor=south east,},
tick align=outside,
tick pos=left,
font=\scriptsize,
x grid style={darkgray176},
xlabel={$p_t$},
xmajorgrids,
xmin=0.03, xmax=0.47,
xtick style={color=black},
y grid style={darkgray176},
ylabel={AoII reduction (\%)},
ymajorgrids,
ymin=70, ymax=90,
ytick style={color=black}
]
\addplot [semithick, color3, mark=o, mark size=3, mark options={solid}]
table {%
0.05 74.8461538461538
0.1 79.0484848484848
0.15 79.8594175871468
0.2 79.9601236979167
0.25 78.1400778210117
0.3 79.5101351351351
0.35 78.8013029315961
0.4 79.0876132930514
0.45 78.7578347578348
};
\addlegendentry{$N=25$}
\addplot [semithick, color2, mark=square, mark size=3, mark options={solid}]
table {%
0.05 75.9413407821229
0.1 81.3833333333333
0.15 82.4566371681416
0.2 81.6584615384615
0.25 82.0313315926893
0.3 82.8135593220339
0.35 82.1366594360087
0.4 81.7484662576687
0.45 82.5482661668229
};
\addlegendentry{$N=50$}
\addplot [semithick, color1, mark=triangle, mark size=3, mark options={solid}]
table {%
0.05 80.9874804381847
0.1 82.7740784780024
0.15 82.2374331550802
0.2 83.059585492228
0.25 83.2329588014981
0.3 83.549383517197
0.35 82.9540906415539
0.4 82.9165727170237
0.45 83.1858125992589
};
\addlegendentry{$N=75$}
\addplot [semithick, color0, mark=x, mark size=3, mark options={solid}]
table {%
0.05 83.0040983606557
0.1 83.3650546021841
0.15 83.4652147610405
0.2 83.7967178401271
0.25 83.600561272217
0.3 83.6891147818721
0.35 83.2768742562475
0.4 83.8398550724638
0.45 84.0648723257419
};
\addlegendentry{$N=100$}
\end{axis}

\end{tikzpicture}
\caption{Performance improvement over the PT1 policy as a function of $p_t$.}
\label{fig:pt1}\vspace{-0.4cm}
\end{figure}
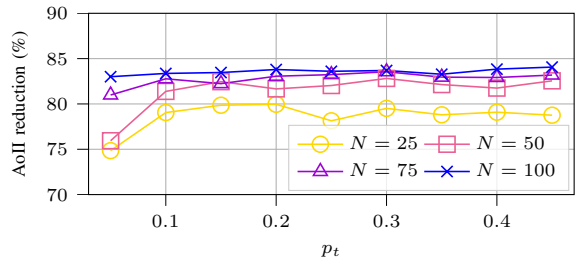

Next, we compare the dual policy with the PT1 and PTE benchmarks: Figs.~\ref{fig:pt1} and \ref{fig:pte} show that the performance improvement in terms of the average AoII is between 75\% and 85\%, i.e., the AoII achieved by the dual policy is about 5 times lower than for state-independent strategies. PT1 performs better than PTE, but while PTE has the same energy efficiency as our policy by design, the load generated by PT1 is always higher, causing a higher energy cost for sensors. The overall improvement is slightly higher for larger networks, but even for smaller networks the performance gain is significant.

\begin{figure}
\centering
\begin{tikzpicture}

\definecolor{crimson2143940}{RGB}{214,39,40}
\definecolor{darkgray176}{RGB}{176,176,176}
\definecolor{darkorange25512714}{RGB}{255,127,14}
\definecolor{forestgreen4416044}{RGB}{44,160,44}
\definecolor{lightgray204}{RGB}{204,204,204}
\definecolor{steelblue31119180}{RGB}{31,119,180}

\begin{axis}[
width=\sfwidth,
height=\sfheight,
legend cell align={left},
legend style={fill opacity=0.8, draw opacity=1, text opacity=1, draw=lightgray204,legend columns=2,
  at={(0.99,0.02)},
  anchor=south east,},
tick align=outside,
tick pos=left,
font=\scriptsize,
x grid style={darkgray176},
xlabel={$p_t$},
xmajorgrids,
xmin=0.03, xmax=0.47,
xtick style={color=black},
y grid style={darkgray176},
ylabel={AoII reduction (\%)},
ymajorgrids,
ymin=70, ymax=90,
ytick style={color=black}
]
\addplot [semithick, color3, mark=o, mark size=3, mark options={solid}]
table {%
0.05 81.2847682119205
0.1 84.2146118721461
0.15 83.0843373493976
0.2 81.1302681992337
0.25 81.5803278688525
0.3 80.987460815047
0.35 82.3152173913044
0.4 80.6648044692737
0.45 81.9903381642512
};
\addlegendentry{$N=25$}
\addplot [semithick, color2, mark=square, mark size=3, mark options={solid}]
table {%
0.05 81.3974082073434
0.1 85.5638126009693
0.15 85.7790530846485
0.2 84.3337713534823
0.25 84.1974741676234
0.3 84.4478527607362
0.35 84.8342541436464
0.4 83.8607594936709
0.45 84.4565943238731
};
\addlegendentry{$N=50$}
\addplot [semithick, color1, mark=triangle, mark size=3, mark options={solid}]
table {%
0.05 85.0565805658057
0.1 85.8801169590643
0.15 85.7686375321337
0.2 85.8565248738284
0.25 84.5627586206896
0.3 84.5278810408922
0.35 83.5821995464853
0.4 84.1163522012579
0.45 83.7948979591837
};
\addlegendentry{$N=75$}
\addplot [semithick, color0, mark=x, mark size=3, mark options={solid}]
table {%
0.05 87.3759512937595
0.1 86.7950464396285
0.15 85.6298633017876
0.2 86.6573670444638
0.25 85.523534269199
0.3 84.7424722662441
0.35 84.1686068343973
0.4 84.4157931516422
0.45 84.570664884731
};
\addlegendentry{$N=100$}
\end{axis}

\end{tikzpicture}
\caption{Performance improvement over the PTE policy as a function of $p_t$.}
\label{fig:pte}
\end{figure}
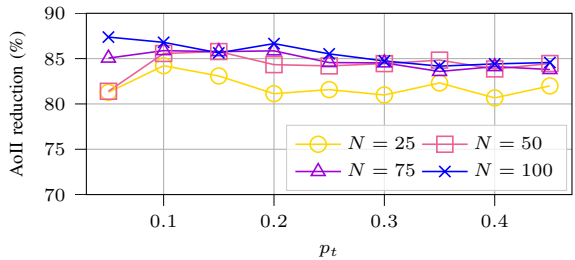

Finally, we can visualize the strategy itself: we take the case with $N=100$ and $p_t=0.3$ and plot the transmission probability as a colormap in Fig.~\ref{fig:p_visual}. We can see that the policy almost never transmits if the AoI is lower than 30, and that probability increases as $f$ and $g$ increases, but is very high if $f$ is saturated: in that case, the real AoII is unbounded, and transmission is relatively urgent. The maximum transmission rate in that case is close to 0.08, which is much higher than the normal rate. The case in which $g>f$, which is shown as having transmission probability 0 in the colormap, is never reached in practice, as we discussed above. We can also see the steady state probability for each state, mapped in Fig.~\ref{fig:ssd_visual}: in general, states with a relatively high age $f$ are reached often, but only if the error $g$ is very low. The unlucky case in which a sequence of transitions quickly leads the AoII to increase also has a relatively high probability, but as soon as the overall AoII passes the threshold, the transmission probability is correspondingly high.

\section{Conclusions and Future Work}\label{sec:conclusion}

In this work, we analyzed the optimization of the AoII in a slotted ALOHA network, in which sensors need to distributedly transmit updates about independent Markov processes. We consider a dual optimization based on a steady state approximation of other sensors to find a high-performance strategy to minimize AoII, starting from a threshold-based policy and gradually improving it.
We found that the policy outperforms naive approaches that do not take the sensor state into account and benchmark threshold policies, and that sensors can successfully coordinate to reduce AoII, even though the distributed scenario is much harder than a centralized one.

In future work, we plan to consider more advanced scenarios in which sensors' observations may be correlated, as well as dual approaches that combine polling and unprompted updates, taking the best from each approach to deal with complex environments which cannot be perfectly modeled.

\begin{figure}[t]
    \centering
    \includegraphics[width = 0.9\columnwidth]{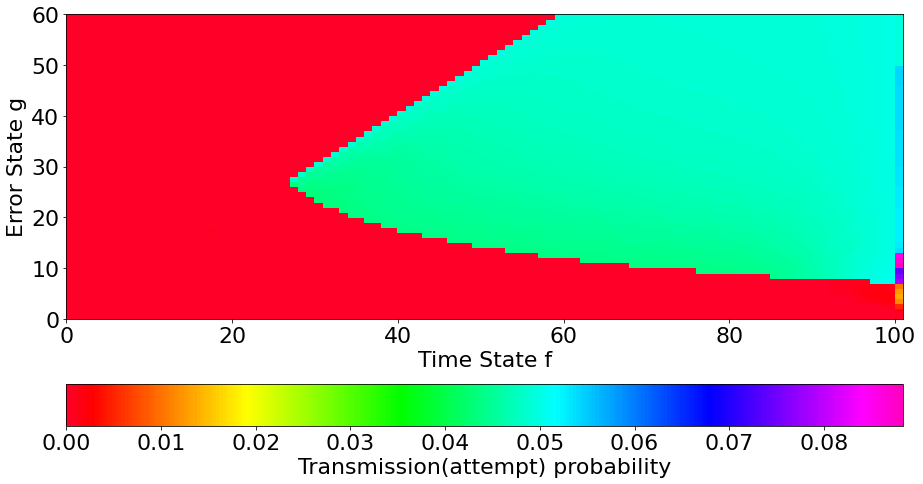}
    \caption{Visualization of the policy used ($p_t\!=\!0.3,N\!=\!100$).}
    \label{fig:p_visual}\vspace{-0.4cm}
\end{figure}

\begin{figure}[t]
    \centering
    \includegraphics[width = 0.9\columnwidth]{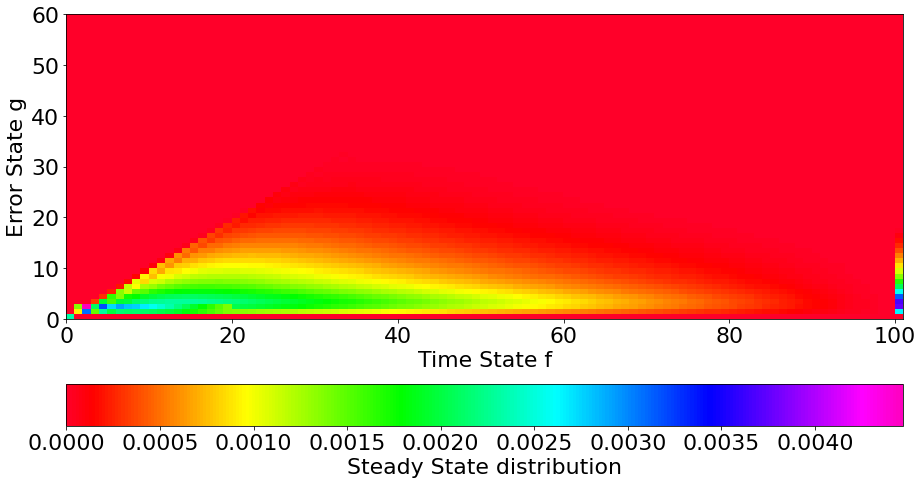}
    \caption{Steady-state distribution (optimized policy, $p_t\!=\!0.3,N\!=\!100$).}
    \label{fig:ssd_visual}\vspace{-0.4cm}
\end{figure}

\section*{Acknowledgment}
This work was supported by the Villum Investigator grant ``WATER'' from the Velux Foundation, Denmark. The work of A. E. Kal{\o}r was supported by the Independent Research Fund Denmark under Grant 1056-00006B. The work of F. Chiariotti was supported by the European Union under the Italian National Recovery and Resilience Plan (NRRP) of NextGenerationEU, under the REDIAL Young Researchers grant and the partnership PE0000001 - program
``RESTART''.

\bibliographystyle{IEEEtran}

\end{document}